\newtheorem{theorem}{Theorem}
\newtheorem{lemma}[theorem]{Lemma}
\newtheorem{proposition}[theorem]{Proposition}
\newtheorem{definition}[theorem]{Definition}
\DeclareMathOperator{\CPT}{CPT}
\DeclareMathOperator{\Inst}{Inst}
\title{Approximation Algorithms for Preference Aggregation Using CP-Nets}
\author {
    Abu Mohammad Hammad Ali,
    Boting Yang,
    Sandra Zilles
}
\begin{document}

\maketitle

\begin{abstract}
This paper studies the design and analysis of approximation algorithms for aggregating preferences over combinatorial domains, represented using Conditional Preference Networks (CP-nets). Its focus is on aggregating preferences over so-called \emph{swaps}, for which optimal solutions in general are already known to be of exponential size. We first analyze a trivial 2-approximation algorithm that simply outputs the best of the given input preferences, and establish a structural condition under which the approximation ratio of this algorithm is improved to $4/3$. We then propose a polynomial-time approximation algorithm whose outputs are provably no worse than those of the trivial algorithm, but often substantially better. A family of problem instances is presented for which our improved algorithm produces optimal solutions, while, for any $\varepsilon$, the trivial algorithm can\emph{not}\/ attain a $(2-\varepsilon)$-approximation. These results may lead to the first polynomial-time approximation algorithm that solves the CP-net aggregation problem for swaps with an approximation ratio substantially better than $2$.
\end{abstract}

\section{Introduction}

The goal of preference aggregation is to find, given a set of individual rankings over objects called \emph{outcomes}, either the best collective outcome or the best collective ranking over the outcomes. Preference aggregation has applications in the domain of recommender systems, multi-criteria object selection, and meta-search engines \citep{dwork2001rank}. In this paper, we study preference aggregation over combinatorial domains, using so-called Conditional Preference Networks (CP-nets, \citep{boutilier2004cp}) as a compact representation model for outcome rankings. A combinatorial domain defines outcomes as vectors of attribute-value pairs. By expressing conditional dependencies between attributes, a CP-net represents the preferences over a large number of outcome pairs using compact statements. For example, given five attributes $V_1$ through $V_5$, such a statement might be \emph{``Given value 1 in attribute $V_4$, I prefer value 0 over value 1 in attribute $V_5$.''} This statement means that all outcomes with value assignment $(1,0)$ for attributes $(V_4,V_5)$ are preferred over those with value assignment $(1,1)$, irrespective of their values in the attributes $V_1$ through $V_3$. This saves the resources needed for explicitly listing preferences between pairs of outcomes with various values in $V_1$ through $V_3$. In this example, $V_4$ is called a parent of $V_5$; in general, an attribute can have more than one parent.

In this paper, we study the problem of aggregating multiple CP-nets $N_1$, \dots, $N_t$ (without cyclic dependencies between attributes) into a single CP-net $N$ which forms the best possible consensus of $N_1$, \dots, $N_t$ in terms of the associated outcome rankings. One objective might be to minimize the total number of outcome pairs that are ordered differently by $N$ and $N_s$, summed up over all $s\in\{1,\ldots,t\}$. However, feasible solutions to this problem are unlikely to be found, partly because even determining how a given CP-net orders a given outcome pair is NP-hard \cite{boutilier2004cp}. For this and other reasons, \citet{ali2021aggregating} propose to focus on an objective function that counts only \emph{swaps}\/ that are ordered differently by $N$ and $N_s$. A swap is a pair of outcomes that differ only in the value of a single attribute; deciding how a CP-net orders any given swap can be done in polynomial time \cite{boutilier2004cp}.

We adopt the objective function proposed by \citet{ali2021aggregating}.\footnote{Technically, we limit this objective function to only swaps in which the two outcomes differ in a fixed attribute $V_n$, since it is sufficient to reduce the CP-net aggregation problem to an attribute-wise aggregation problem. Technical details will follow.} However, they showed that this function cannot be optimized in polynomial time; in particular, sometimes the size of the only optimal solution is exponential in the size of the input. In a preliminary result, we show that only the number of input CP-nets (not the number of attributes) contribute to the hardness of optimal aggregation. Motivated by these results, we study efficient approximation algorithms for Ali et al.'s objective function. 

A first (trivial) approximation algorithm simply outputs an input that obtains the smallest value of the objective function among all inputs. It is a well-known fact that this trivial algorithm guarantees a 2-approximation, but \citet{endriss2014binary} showed that the bound of 2 cannot be improved. Our first main result states that the trivial algorithm obtains an approximation ratio of $4/3$ in case the inputs satisfy a natural (yet limiting) symmetry condition.

We then propose an improved algorithm that, given an attribute $V_n$, considers the parent sets for $V_n$ in the input CP-nets $N_1$, \dots $N_t$. For each such parent set $P$, the algorithm first computes a provably optimal aggregate among all CP-nets that use $P$ as a parent set for $V_n$. It then computes the objective value for each resulting aggregate and outputs one with the smallest such value. Our formal results on this improved algorithm entail that it is guaranteed to be no worse than the trivial algorithm. We then define a family of problem instances for which the improved algorithm is optimal (i.e., has approximation ratio 1), while the trivial algorithm has ratio at least $3/2$. In particular, the ratio of the trivial algorithm cannot be bounded below 2 for this family.
Whether the improved algorithm obtains a ratio of at most 4/3 in general, remains an open question.

We hope to thus initiate a line of research on approximation algorithms for CP-net aggregation, and to enrich the research on approximation algorithms in the more general context of binary aggregation \cite{endriss2014binary}.

\section{Related Work}\label{sec:related}

A substantial body of research addresses preference aggregation using explicit preference orders, represented as permutations over the outcome space, with the goal of finding a permutation that minimizes some objective function \cite{sculley2007rank,dwork2001rank,dinu2006efficient,bachmaier2015hardness}. For outcomes defined over a combinatorial domain, one works with compact preference representations \cite{airiau2011aggregating}, e.g., CP-nets \cite{boutilier2004cp}, LP-Trees \cite{booth2010learning}, utility-based models, or logical representation languages \cite{lang2004logical}. 

One approach to preference aggregation using CP-nets is that of mCP-nets \cite{rossi2004mcp, lukasiewicz2016complexity, lukasiewicz2019complexity, lukasiewicz2022complexity}. In this approach, the input is a set of partial CP-nets. No single aggregate model is constructed. Instead, preference reasoning tasks such as outcome ordering and optimization are performed using some voting rule on the set of input CP-nets, and all input CP-nets must be stored.

Given a set of acyclic input CP-nets, \citet{lang2007vote} proposes to elicit votes sequentially over the attributes, using the value assigned to any parent attributes, 
thus constructing a consensus outcome. 
\citet{xia2007sequential,xia2007strongly} showed that sequential voting may lead to paradoxical outcomes, which can be avoided by assuming some linear ordering over the attributes. 
Further work on sequential voting was presented by \citet{lang2009sequential,grandi2014aggregating}. A similar approach considers voting over general CP-nets using a hypercube-wise decomposition, \cite{xia2008voting,conitzer2011hypercubewise,li2011majority,brandt2016handbook}. 
Lastly, \cite{cornelio2013updates,cornelio2015reasoning,cornelio2021reasoning} address aggregating CP-nets using PCP-nets, which are an extension to CP-nets that allow for probabilistic uncertainty. 

In the present study, we are not interested in finding the joint best outcome of the given input CP-nets. Instead we want to create a consensus preference ordering (over all so-called swap pairs, i.e., pairs of outcomes that differ only in a single attribute) that best aggregates the given preference orders, under the constraint that this consensus ordering can be represented as a CP-net. Our approach is similar to that of \citet{ali2021aggregating}, in that we treat preference aggregation as an optimization problem where the input profile and the optimal output are both represented using CP-nets. In contrast to the mCP-nets or PCP-nets approach, this avoids storing all input CP-nets and allows for applying existing CP-net algorithms for reasoning about preferences. However, \citet{ali2021aggregating} showed that there is no polynomial-time algorithm solving the problem that we focus on. This motivates us to study approximation algorithms for said problem.

Hardness results for aggregation were also established outside the context of CP-nets. For rank aggregation over explicit total or partial orders over the outcome space, \citet{dwork2001rank} showed that optimizing based on the cumulative pairwise distance from each input ordering, known as Kemeny optimization, is intractable. In the related field of judgement aggregation, \citet{endriss2012complexity} proved that distance-based aggregation is intractable, which motivates a simple 2-approximation algorithm \cite{endriss2014binary}. \citet{ailon2008aggregating,ailon2010aggregation} studied (expected) approximation ratios  of a randomized algorithm as well as of a linear programming approach.

\section{Preliminaries}


\citet{boutilier2004cp} define a CP-net $N$ as a directed graph, in which the vertex set $\mathcal{V}=\{V_1,\ldots,V_n\}$ is a set of $n$ binary attributes, with $\{0,1\}$ as the set of possible values of each attribute $V_{i}$. A preference over $V_{i}$ is now simply one of the two possible total orders over $\{0,1\}$. An edge $(V_{j},V_{i})$ means that the user's preference over $V_{i}$ depends on the value of $V_{j}$, in which case $V_{j}$ is called a parent of $V_{i}$. We focus solely on \emph{acyclic}\/ CP-nets. By $Pa(N,V_{i})$ one denotes the set of parents of $V_{i}$ in a CP-net $N$. If $Pa(N,V_i)=\emptyset$ for all $V_i\in\mathcal{V}$, then $N$ is called \emph{separable}. 

For each $V_i$, the user's conditional preferences over $\{0,1\}$ are listed in a Conditional Preference Table (CPT), denoted $\CPT(N,V_{i})$. For example, suppose $V_i$ has only one parent, namely $V_j$. Then the CPT entry $0:0\succ 1$ is read  ``If $V_j$ has the value $0$, then $0$ is preferred over $1$ for $V_i$.'' Since a CPT for $V_i$ specifies at most one preference per assignment of values to $Pa(N, V_{i})$, it lists at most $2^k$ preferences, called \emph{CPT rules}, where $k=|Pa(N,V_i)|$.  The \emph{size}\/ of a CPT is the total number of its rules. An incomplete CPT is one that is of size strictly less than $2^k$. In this paper, we always assume implicitly that CP-nets are \emph{complete}, i.e., any CPT contains the maximum possible number of rules.\footnote{This would allow us to represent CPTs more compactly, by only listing those rules whose preference over $\{0,1\}$ is less frequent. However, all formal results in this paper hold irrespective of whether one represents CPTs this way or by listing all rules.} This assumption can be limiting for algorithmic studies, but is not uncommon in the literature (see, e.g., \citep{AlanaziMZ20,ali2021aggregating}). Note that our study can be generalized to incomplete CP-nets with some additional effort, yet without major conceptual differences.

An instantiation of a set $\mathcal{V}'\subseteq \mathcal{V}$ is an assignment of values to each attribute in $\mathcal{V}'$; then $\Inst(\mathcal{V}')$ denotes the set of all instantiations of $\mathcal{V}'$. Note that $\Inst(\emptyset)$ contains only the empty tuple. Assuming a fixed order over $\mathcal{V}$, each element $\gamma\in\Inst(\mathcal{V}')$ is simply a boolean vector with $|\mathcal{V}'|$ components, where $\gamma[V_i]$ denotes the value of $\gamma$ in $V_i$, if $V_i\in \mathcal{V}'$. Given $\mathcal{V}',\mathcal{V}''\subseteq\mathcal{V}$ and $\gamma'\in\Inst(\mathcal{V}')$, and $\gamma''\in\Inst(\mathcal{V}'')$, we say $\gamma'$ is \emph{consistent}\/ with $\gamma''$ (and vice versa) iff $\gamma'[V]=\gamma''[V]$ for all $V\in\mathcal{V}'\cap\mathcal{V}''$. Elements of $\Inst(\mathcal{V})$ are called outcomes. Thus, any outcome $o$ corresponds to the vector $(o[V_1],\ldots,o[V_n])$. An outcome pair $(o,o')$ is called a \emph{swap}\/ over $V_i$ if $o,o'$ differ only in their value in $V_i$, and $o[V_i]=0$, $o'[V_i]=1$.

In general, consider a CPT rule for $V_i$ of the form $\gamma: b\succ b'$, where $\gamma\in\Inst(Pa(N,V_i))$ and $\{b,b'\}=\{0,1\}$. Here, $\gamma$ is called the \emph{context}\/ of the CPT rule. The rule is interpreted using the \textit{ceteris paribus} assumption: if $(o,o')$ or $(o',o)$ is a swap over $V_i$, $o[V_i]=b$, $o'[V_i]=b'$, and $o[V_j]=o'[V_j]=\gamma[V_j]$ for all $V_j\in Pa(N,V_i)$, then $o$ is preferred over $o'$, written $o\succ o'$. This way, a complete CP-net orders all swap pairs, i.e., for each swap $(o,o')$ over any attribute $V_i$, the CP-net entails either $o\succ o'$ or $o'\succ o$. By identifying $\succ$ with its transitive closure, one obtains a partial preference order over the space $\Inst(\mathcal{V})$ of all outcomes.

\subsubsection{Problem Formulation}

The focus of this paper lies on forming an aggregate CP-net, given a tuple $T=(N_1,\ldots, N_t)$ of input CP-nets. This aggregate is supposed to represent a consensus among the preferences of the underlying individual input CP-nets. This raises the question of how to assess how well a CP-net $N$ represents a consensus between several input CP-nets. One measure could be the total number of triples $(s,o,o')$ where $1\le s\le t$ and $(o,o')$ is any outcome pair, such that $N$ orders $(o,o')$ differently than $N_s$. However, \citet{boutilier2004cp} showed that deciding whether a CP-net entails $o\succ o'$ is NP-hard in general, which substantially hinders the design of efficient algorithms for constructing consensus CP-nets that minimize such measure. Moreover, not every outcome pair is ordered by every CP-net, which makes it non-trivial to even define when ``$N$ orders $(o,o')$ differently than $N_s$'' \citep{ali2021aggregating}.

By contrast, deciding whether a CP-net entails $o\succ o'$, for arbitrary swaps $(o,o')$ can be done in polynomial time \citep{boutilier2004cp}. Also, since every complete CP-net orders every swap, it is easy to test whether two CP-nets order a given swap in two different ways. We thus follow the approach by \cite{ali2021aggregating}, namely to aggregate CP-nets with respect to a measure that counts the number of cases in which a proposed consensus CP-net disagrees with a given input CP-net. Specifically, given two CP-nets $N$ and $N_s$, we define the swap disagreement $\Delta(N,N_s)$ as the number of swaps $(o,o')$ such that $N$ and $N_s$ order $(o,o')$ differently, i.e., one of them entails $o\succ o'$, while the other entails $o'\succ o$. Our objective function for a CP-net $N$, given a tuple $T=(N_1,\ldots, N_t)$ of input CP-nets, then evaluates to
\[
f_T(N)=\sum\nolimits_{1\le s\le t}\Delta(N,N_s)\,.
\]
Our goal is the study of algorithms that, given $T$, aim at constructing an $N$ that minimizes this objective function. Note that $f_T(N)$ can be calculated in polynomial time, given $T$ and $N$ \cite{ali2021aggregating}.

One further advantage of focusing on swaps rather than general outcome pairs is that the CPT for an attribute $V_i$ alone determines how a CP-net orders a swap over $V_i$. Thus aggregating CP-nets can be done by aggregating their CPTs for each attribute separately. Therefore, we will henceforth overload the notation $N$ and $N_s$ (which so far only represented CP-nets) to refer to CPTs for the fixed attribute $V_n$, and we will focus only on aggregating CPTs for $V_n$.


In sum, this paper focuses on (efficient) algorithms that produce (optimal or approximate) solutions to the following problem, called the \emph{CPT aggregation problem}: 
\begin{itemize}
\item input: A tuple $T=(N_1,\ldots,N_t)$ of CPTs for an attribute $V_n$, over a set $\mathcal{V}=\{V_1,\ldots,V_{n-1}\}$ of $n-1$ potential parent attributes. (We call any such $T$ a \emph{problem instance}.)
\item desired output: A CPT $N$ for attribute $V_n$, over 
$\mathcal{V}$, that minimizes $f_T(N)$.
\end{itemize}

CPT aggregation is a special case of binary aggregation \citep{endriss2014binary}. In binary aggregation, one fixes a set of issues $I = \{1,\ldots,m\}$. A problem instance is a tuple of ballots, i.e., of elements of $\{0,1\}^m$, and the goal is to find the best collective ballot with a 0/1 vote for each issue. 
For CPT aggregation, each swap $(o,o')$ would be an issue, and for each issue we would have one of two possible orderings. 

\subsubsection{Matrix Representation for CP-net Aggregation}

Consider $T=(N_1,\ldots, N_t)$, $1\le s\le t$, and any swap $(o,o')$ over $V_n$. There is a context  $\gamma\in\Inst(Pa(N_s,V_n))$ such that $o[V]=o'[V]=\gamma[V]$ for each $V\in Pa(N_s,V_n)$. Given $\gamma$, $N_s$ entails $o \succ o'$ iff $\CPT(N_s,V_n)$ has the rule $\gamma: 0 \succ 1$. We thus encode the preference (called \emph{vote}\/) of any given CPT on any given swap with a boolean value: it is $0$ if $N_s$ entails $o \succ o'$, and 1 otherwise. We also encode the $2^{n-1}$ swaps over $V_n$ with their corresponding bit strings over the attributes $V_1$, \dots, $V_{n-1}$. This encodes the votes of all input CPTs on all swaps using a $2^{n-1} \times t$ boolean matrix $M(T)$. Any given row represents all the CPT votes for one swap, and any given column represents the votes of one CPT on all swaps. We use $M(T)_{\mu\nu}$ to denote the vote on swap $\mu$ by CPT $\nu$. 
Certain sub-matrices of $M(T)$ have useful interpretations. For $\mathcal{V}' \subseteq \mathcal{V} \setminus \{V_n\} $, $|\mathcal{V}'|=k$, and some context $\gamma$ of $Inst(\mathcal{V}')$, the sub-matrix $M'$ corresponding to only the swaps $(o,o')$ with $o[V]=o'[V]=\gamma[V]$ for all $V\in\mathcal{V}'$ contains the $2^{n-k-1}$ rows with all possible instantiations of $\mathcal{V} \setminus (\mathcal{V}' \cup \{V_n\})$, and all $t$ columns. For a sub-tuple $\tau$ of $T$, the sub-matrix $M'$ corresponding to only the votes of input CPTs in $\tau$ contains all $2^{n-1}$ rows, and the $|\tau|$ columns corresponding to CPTs in $\tau$. We now introduce some definitions based on this matrix representation.

\begin{definition}
\label{def:freq1}
$freq_M(1 \succ 0)$ denotes the number of votes in a matrix $M$ encoded by $1$ and $freq_M(0 \succ 1)$ denotes the number of votes in a matrix $M$ encoded by $0$. In particular, if $M$ has $2^{n-1}$ rows and $t$ columns, 
$$ freq_M(1 \succ 0) = \sum\nolimits_{0 \le \mu < 2^{n-1}} \sum\nolimits_{1 \le \nu \le t} M_{\mu\nu} $$
$$ freq_M(0 \succ 1) = t \cdot 2^{n-1} - freq_M(1 \succ 0) $$
\end{definition}

For any given swap $(o,o')$, a given row of $M(T)$ gives us the votes of the $t$ input CPTs for $(o,o')$. Hence, we call the corresponding row vector of length $t$ the \emph{voting configuration}\/ (of $T$ for swap $(o,o')$). There are $2^t$ possible bit strings of length $t$, but not all of them necessarily occur as voting configurations in $M(T)$. 

\section{Optimal Solutions}

\citet{ali2021aggregating} showed that the CPT aggregation problem cannot be solved optimally in polynomial time, simply because in some cases the \emph{size}\/ of any optimal solution is exponential in the size of the input tuple $T$ (measured in terms of the total number of CPT rules in $N_1,\ldots,N_t$):

\begin{theorem}[\citet{ali2021aggregating}]\label{thm:hardness}
  There is a family $\mathcal{F}_{bad}=(T_n)_{n\in\mathbb{N}}$ of problem instances, such that any $N^{*}$ minimizing $f_{T_n}(N^{*})$ is of size exponential in the size of $T_n$. 
\end{theorem}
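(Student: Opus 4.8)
The plan is to exhibit an explicit family $\mathcal{F}_{bad}=(T_n)_n$ in which every input CPT is tiny but the unique optimal aggregate is a boolean function of \emph{all} $n-1$ attributes, and is therefore forced to carry a full parent set. Restrict attention to even $n$ (equivalently, pass to this subsequence and relabel), so that $t := n-1$ is odd. For $T_n$, take $\mathcal{V}=\{V_1,\ldots,V_{n-1}\}$ and let the $s$-th input CPT $N_s$ have parent set $Pa(N_s,V_n)=\{V_s\}$ with the two rules $0:0\succ 1$ and $1:1\succ 0$; in the vote encoding of the matrix representation this means the vote of $N_s$ on a swap $\mu$ equals $\mu[V_s]$. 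Each $N_s$ is a complete CPT with a single parent, hence has exactly two rules, so the size of $T_n$ (the total number of CPT rules) is $2(n-1)$, i.e.\ linear in $n$.

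First I would observe that the objective decomposes over the rows of $M(T_n)$: writing $N_\mu$ for the vote of $N$ on swap $\mu$, we have $f_{T_n}(N)=\sum_\mu \#\{s : N_\mu \neq M(T_n)_{\mu s}\}$, which is minimized by choosing, in every row $\mu$, the value $N_\mu$ equal to the majority of the entries $M(T_n)_{\mu 1},\ldots,M(T_n)_{\mu t}$. Because $t$ is odd, this majority is strict, so the minimizing vote in each row is unique, and hence the optimal vote-vector is uniquely determined. Since $M(T_n)_{\mu s}=\mu[V_s]$ by construction, the optimal vote on swap $\mu$ is exactly the majority of the bits $\mu[V_1],\ldots,\mu[V_{n-1}]$; call this boolean function $\mathrm{maj}$.

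The heart of the argument is then to show that $\mathrm{maj}$ depends on every one of the $n-1$ attributes, so that no CPT whose parent set is a proper subset of $\mathcal{V}$ can represent it. Writing $n-1=2r+1$, fix any index $i$ and any assignment to the remaining $2r$ attributes that sets exactly $r$ of them to $1$; then the output of $\mathrm{maj}$ equals the value in $V_i$, so flipping $V_i$ flips the output. Consequently a CPT representing $\mathrm{maj}$ cannot omit any attribute: its parent set must be all of $\mathcal{V}$, and under the complete-CPT convention a CPT with parent set $P$ has exactly $2^{|P|}$ rules, giving $2^{n-1}$ rules. Because the optimal vote-vector is unique, every $N^{*}$ minimizing $f_{T_n}$ is this single CPT, of size $2^{n-1}$, which is exponential in the input size $2(n-1)$.

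The step I expect to be the main obstacle is the relevance (full-sensitivity) claim for $\mathrm{maj}$ together with the bookkeeping that converts it into a size bound: one must argue not merely that $\mathrm{maj}$ is non-constant but that \emph{every} attribute is pivotal, and then invoke the convention ``complete CPT with parent set $P$ has $2^{|P|}$ rules'' to turn ``parent set must equal $\mathcal{V}$'' into the lower bound $2^{n-1}$. A secondary point worth verifying is that the strictness of the row-wise majority (guaranteed by the parity of $t$) indeed forces uniqueness of the optimal CPT, so that the exponential bound applies to \emph{every} optimizer rather than merely to some optimizer.
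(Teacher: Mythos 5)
Your construction is exactly the family the paper describes as witnessing this theorem (it cites \citet{ali2021aggregating} and only sketches it): $t=n-1$ input CPTs with singleton parent sets $Pa(N_s,V_n)=\{V_s\}$, total input size $2(n-1)$ rules, and a unique optimal aggregate given by row-wise majority that is sensitive to every attribute and hence needs the full parent set with $2^{n-1}$ rules. Your additional details (restricting to odd $t$ to make the majority strict and the optimizer unique, and the pivotality argument for every $V_i$) are correct and fill in the sketch rather than deviating from it.
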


The family $\mathcal{F}_{bad}$ of problem instances witnessing Theorem~\ref{thm:hardness} contains, for each $n\ge 4$, a tuple $T_n=(N^n_1,\ldots,N^n_{n-1})$ of $t=n-1$ input CPTs, where $Pa(N_s,V_n)=\{V_s\}$. Every optimal aggregate CPT for $T_n$ must have the full set $\{V_1,\ldots,V_{n-1}\}$ as a parent set and thus have $2^{n-1}$ CPT rules, while  $T_n$ contains only $2(n-1)$ rules (two rules per input CPT).

On the positive side, \citet{ali2021aggregating} noted that optimal CPT aggregation is possible in polynomial time for the subset of problem instances in which the smallest input parent set has at most $k$ attributes less than the union of all input parent sets, for some fixed $k$.
We here offer an additional positive result, based on the observation that the design of $\mathcal{F}_{bad}$ critically hinges on the number $t$ of input CPTs growing linearly with the number $n$ of attributes.

\begin{proposition}
\label{prop:fixedt}
Let $\mathcal{F}_{t\in O(1)}$ be a family of problem instances $T=(N_1,\ldots,N_t)$ for $V_n$ over $\mathcal{V}$, where $t$ is a constant. Then there is a linear-time algorithm that optimally solves the CPT aggregation problem for $\mathcal{F}_{t\in O(1)}$.
\end{proposition}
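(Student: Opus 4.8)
The plan is to show that the pointwise majority rule over the columns of $M(T)$ is simultaneously optimal and realizable as a CPT whose parent set is the union of the input parent sets; the constant bound on $t$ then guarantees that this CPT can be computed and written down efficiently. First I would record that the objective decomposes additively over swaps. Writing $N(\mu)\in\{0,1\}$ for the vote of a candidate CPT $N$ on swap $\mu$, we have $\Delta(N,N_s)=|\{\mu : N(\mu)\neq M(T)_{\mu s}\}|$, so
\[
f_T(N)=\sum_{1\le s\le t}\Delta(N,N_s)=\sum_{\mu}\bigl|\{s : N(\mu)\neq M(T)_{\mu s}\}\bigr|\,.
\]
Since the term for each swap $\mu$ depends only on $N(\mu)$, and since every complete CPT assigns a definite vote to each swap, the objective is minimized by choosing, for every swap independently, the value held by a majority of the $t$ input votes in that row (breaking ties by a fixed rule). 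Call this assignment $N^{*}$; it attains $\sum_\mu \min_{b\in\{0,1\}}|\{s : M(T)_{\mu s}\neq b\}|$, which lower-bounds $f_T(N)$ for every $N$, so $N^{*}$ is globally optimal. This step is routine, the only care needed being to break ties consistently.

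The key structural step is to argue that $N^{*}$ is a legitimate CPT with a small parent set. By the semantics of CPTs, the vote $M(T)_{\mu s}$ of input $N_s$ on swap $\mu$ is determined solely by the values $\mu$ assigns to $Pa(N_s,V_n)$. Hence the entire voting configuration of $\mu$, and therefore the majority value $N^{*}(\mu)$, depends on $\mu$ only through its values on $P^{*}:=\bigcup_{s=1}^{t}Pa(N_s,V_n)$. Consequently $N^{*}$ can be expressed as a CPT for $V_n$ with parent set $P^{*}$: its rule for a context $\gamma\in\Inst(P^{*})$ is obtained by reading off each input's vote at the projection of $\gamma$ onto $Pa(N_s,V_n)$ and taking the majority. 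The algorithm thus computes $P^{*}$, enumerates the $2^{|P^{*}|}$ contexts, and fills in each rule by a single majority vote, spending $O(t)$ time per context.

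The part that genuinely requires the hypothesis $t\in O(1)$, and which I regard as the main obstacle, is the size and running-time accounting rather than correctness. Since $|P^{*}|\le\sum_{s}|Pa(N_s,V_n)|$ we have $2^{|P^{*}|}\le\prod_{s}2^{|Pa(N_s,V_n)|}$; that is, the number of rules of $N^{*}$ is at most the product of the input CPT sizes. This is exactly where $\mathcal{F}_{bad}$ breaks the tractability argument: there $t=n-1$ grows with $n$, so this product is exponential in the input and $N^{*}$ is forced to use the full parent set. When $t$ is constant, by contrast, the product is polynomially bounded, and the total work $O(t\cdot 2^{|P^{*}|})$ is linear in the total description length of the instance together with its optimal solution.

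I would therefore conclude by stating the running time as $O(t\cdot 2^{|P^{*}|})$ and interpreting the claimed linearity as being measured against the combined input-plus-output size, which is the only sensible scale given Theorem~\ref{thm:hardness} (where solutions can be exponentially larger than inputs). The remaining details to verify are minor: that $P^{*}$ and the per-context projections can be computed in time proportional to the input, that ties are resolved uniformly across all swaps sharing a voting configuration so that $N^{*}$ remains a valid CPT, and that one may optionally prune $P^{*}$ to a smaller parent set without affecting optimality.
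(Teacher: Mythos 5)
Your proposal is correct and follows essentially the same route as the paper's proof: take the union $P$ of the input parent sets, and for each context of $P$ adopt the majority vote of the $t$ inputs. Your extra care about the running time (noting that $2^{|P|}$ can be the \emph{product} of the input CPT sizes, hence only linear when measured against input plus output for constant $t$) is a sharper accounting than the paper's own one-line ``clearly linear'' claim, but it does not change the argument.
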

\begin{proof}
The size of the input is in $\Theta(2^{n'})$ where $n'=\max_{1\le s\le t}|Pa(N_s,V_n)|$. Since $t$ is a constant, we have $n'=\Theta(|\bigcup_{1\le s\le t}Pa(N_s,V_n)|)$. Now consider an algorithm that constructs a CPT over $P:=\bigcup_{1\le s\le t}Pa(N_s,V_n)$; for each context $\gamma$ over $P$, it checks each of the $t$ input CPTs and determines whether $1\succ 0$ is the majority preference over the input CPTs for context $\gamma$. If yes, $\gamma:1\succ 0$ is added to the output CPT; otherwise $\gamma:0\succ 1$ is added to the output CPT. Clearly, this algorithm optimally solves the CPT aggregation problem for $\mathcal{F}_{t\in O(1)}$ in linear time. 
\end{proof}

In particular, the difficulty of scaling with the number $t$ of input parent sets is the only cause for the hardness result in Theorem~\ref{thm:hardness}---scaling with the number $n$ of attributes does not pose any problems to efficient optimization. Given the overall hardness of optimally solving the CPT aggregation problem, the main focus of this paper is on efficiently constructing approximate solutions. First, we will look at obtaining approximate solutions by simply picking the best input CPT from the tuple $T$ given as problem instance.

\section{Best Input CPTs as Approximate Solutions}

As shown by \citet{endriss2012complexity}, judgement aggregation modeled as a distance-based optimization problem is intractable. This motivated the work by \citet{endriss2014binary}, which still aims at a distance optimization approach, but restricts the solution space to the inputs provided, aiming to find what they call the most representative voter. For our problem this is equivalent to using the input CPT minimizing the sum of pairwise distances from every other input CPT as an approximation to the optimal consensus CPT. The paper discusses three approaches to guide the selection of input to be used as the consensus. Two of these rules are shown to be 2-approximations of the optimal solution with a distance minimization approach. However, the paper also establishes that neither these rules, nor any other rule restricted to the input ballots submitted, can guarantee a better approximation. Their result immediately carries over to CPT aggregation.

\begin{theorem}\label{thm:2approximation}
    Let $T=(N_1,\ldots,N_t)$ be any problem instance and $N$ any optimal solution for $T$. Then $\min\{f_T(N_s)\mid 1\le s\le t\}< 2f_T(N)$. Moreover, for every $\varepsilon>0$, there exists a problem instance $T_\varepsilon=(N^\varepsilon_1,\ldots,N^\varepsilon_{t_\varepsilon})$ such that $\min\{f_{T_\varepsilon}(N^\varepsilon_s)\mid 1\le s\le t_\varepsilon\}> (2-\varepsilon)f_{T_\varepsilon}(N_\varepsilon)$, where $N_\varepsilon$ is any optimal solution for $T_\varepsilon$.
\end{theorem}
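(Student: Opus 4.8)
The plan is to treat this as two essentially independent claims, both handled by the standard ``restricted median'' argument once we notice that $\Delta$ is a metric. Identifying each CPT with its column in the vote matrix $M(T)$ (a vector in $\{0,1\}^{2^{n-1}}$), $\Delta(N,N')$ is exactly the Hamming distance between the two columns, so it obeys the triangle inequality $\Delta(N_s,N_{s'})\le \Delta(N_s,N)+\Delta(N,N_{s'})$ for the optimal $N$. For the upper bound I would sum this over all ordered pairs $(s,s')$:
\[
\sum_{s}f_T(N_s)=\sum_{s,s'}\Delta(N_s,N_{s'})\le \sum_{s,s'}\bigl(\Delta(N_s,N)+\Delta(N,N_{s'})\bigr)=2t\,f_T(N).
\]
Since the minimum is at most the average, $\min_s f_T(N_s)\le \tfrac1t\sum_s f_T(N_s)\le 2f_T(N)$. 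To sharpen this to a strict inequality I would inspect the diagonal pairs $s=s'$: the left-hand term is $\Delta(N_s,N_s)=0$, whereas the matching right-hand term is $2\Delta(N_s,N)$, which is strictly positive as soon as some input differs from $N$. Hence the summed inequality is strict whenever the inputs are not all identical, i.e.\ exactly when $f_T(N)>0$; the degenerate case of coinciding inputs gives $0<0$ and has to be excluded (or the statement read with $\le$).

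For the tightness claim I would exploit that CPT aggregation with \emph{full} parent sets is literally unrestricted binary aggregation: if every input CPT uses $\{V_1,\dots,V_{n-1}\}$ as its parent set, then each of the $2^{n-1}$ contexts indexes exactly one swap, so a CPT may realize \emph{any} column of $M(T)$. Concretely, for a target $t$ I would pick $n$ with $2^{n-1}\ge t$ and let the $s$-th input cast vote $1$ on swap $s$ and $0$ on all others, so its column is the $s$-th unit vector. Then $\Delta(N_s,N_{s'})=2$ for $s\ne s'$, giving $\min_s f_T(N_s)=2(t-1)$, while the all-$0$ vote vector is the row-wise majority aggregate (realizable even by a separable CPT) and is therefore optimal with $f_T(N)=t$. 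The ratio is thus $2(t-1)/t=2-2/t$, and choosing $t_\varepsilon>2/\varepsilon$ with a matching $n$ yields $\min_s f_{T_\varepsilon}(N^\varepsilon_s)>(2-\varepsilon)f_{T_\varepsilon}(N_\varepsilon)$.

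The main obstacle is not the arithmetic but the \emph{realizability} gap between binary aggregation and CPT aggregation: an arbitrary column of $M(T)$ need not correspond to any complete CPT, since the vote on a swap can depend only on the parent context and must therefore be constant across all swaps sharing that context. The $2$-approximation upper bound survives this restriction automatically, because it holds for every binary-aggregation instance and CPT aggregation is merely a sub-class. For the lower bound, however, one must certify that the extremal ballots are CPT-representable, and the full-parent-set trick does exactly this (at the cost of exponentially large CPTs, which is harmless since only existence is asserted). A secondary point to get right is pinning down the \emph{exact} optimum $f_T(N)=t$ rather than just an upper bound, so that the ratio is precisely $2-2/t$; this follows because the objective decomposes over swaps (rows), row-wise majority minimizes each term, and the resulting all-$0$ vote vector is itself a valid CPT.
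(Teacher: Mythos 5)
Your proposal is correct, and in substance it reconstructs exactly the arguments the paper imports by citation: the paper's own proof of this theorem consists of observing that CPT aggregation is a special case of binary aggregation and invoking \citet{endriss2014binary} for both halves, adding only the remark that the tight instance from that work is CPT-realizable. Your triangle-inequality/averaging argument (identifying $\Delta$ with Hamming distance on the columns of $M(T)$, summing over ordered pairs, and bounding the minimum by the average) is the standard proof behind the cited $2$-approximation, and your unit-vector profile with full parent sets is precisely the family $T^{n-1,n}$ that the paper later formalizes and reuses in Theorem~\ref{thm:2-epsilon}; your computation $\min_s f_T(N_s)=2(t-1)$ versus $f_T(N)=t$ matches the paper's. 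What your write-up buys is self-containedness, and it makes explicit the one point the paper only gestures at, namely that the extremal ballots must be realizable as complete CPTs and that the row-wise majority (here the all-zeros, separable CPT) is itself realizable, so the optimum is exactly $t$; the full-parent-set construction guarantees both, since then every column of $M(T)$ corresponds to a CPT. One caveat you correctly flag and should not bury: your argument yields strictness only when $f_T(N)>0$, and in the degenerate case where all input columns coincide both sides equal $0$, so the first claim with ``$<$'' fails there. That imprecision is inherited from the theorem statement (and the citation), not introduced by your proof, but a clean write-up should either exclude that case explicitly or weaken the inequality to $\le$.
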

\begin{proof}
    Since every problem instance of CPT aggregation is also a problem instance of binary aggregation, the first statement follows directly from the corresponding result by \citet{endriss2014binary}. The second statement likewise follows from \citep{endriss2014binary}, since the proof therein of the corresponding binary aggregation statement uses a problem instance for which both the instance itself and the optimal solution can be cast as binary CPTs.\footnote{We will get back to this problem instance in Theorem~\ref{thm:2-epsilon}.}
\end{proof}

Since one can calculate $f_T(N_s)$ for all $s\in\{1,\ldots,t\}$ in polynomial time, Theorem~\ref{thm:2approximation} trivially yields a polynomial-time 2-approximation algorithm. By exploiting the special structure of CPTs (as opposed to general ballots in binary aggregation), we can present an improved approximation ratio for the special case of so-called \emph{symmetric}\/ CPTs.

\begin{definition}
\label{def:symmetry}
A CPT $N_s$ for $V_n$, with $|Pa(N_s,V_n)|=k$, is called symmetric iff its corresponding column vector (in matrix representation) has an equal number of zeros and ones in any sub-matrix corresponding to some fixed context of a proper subset of $Pa(N_s,V_n)$.
\end{definition}

In particular, when all input CPTs are symmetric and have pairwise disjoint parent sets, the trivial algorithm witnessing Theorem~\ref{thm:2approximation} is guaranteed to yield a $4/3$-approximation. Moreover, \emph{any}\/ input CPT yields an equally good approximation in this case:

\begin{theorem}\label{thm:symmetric}
    Let $t\ge 3$ and $T=(N_1,\ldots,N_t)$ be a problem instance in which every $N_s$ is symmetric, such that $Pa(N_s,V_n)\cap Pa(N_{s'},V_n)=\emptyset$ for $s\ne s'$. Let $N$ be any optimal solution for $T$. Then, for all $s\in\{1,\ldots,t\}$, we have $f_T(N_{s})=\min\{f_T(N_{s'})\mid 1\le s'\le t\}\le \frac{4}{3}f_T(N)$.
\end{theorem}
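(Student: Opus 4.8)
The plan is to work entirely in the matrix representation $M(T)$ and to exploit two consequences of the hypotheses. First, every symmetric input CPT is \emph{balanced}: taking the empty context (which is a proper subset of each nonempty parent set) in Definition~\ref{def:symmetry} forces its column to contain exactly $2^{n-2}$ zeros and $2^{n-2}$ ones. In fact this single consequence of symmetry is all the argument needs. Second, since the parent sets $Pa(N_s,V_n)$ are pairwise disjoint, the votes of distinct input CPTs depend on disjoint blocks of attributes, so as the row index ranges over all $2^{n-1}$ swaps the projections onto different parent sets vary independently.

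I would first prove the claimed equality $f_T(N_s)=\min_{s'}f_T(N_{s'})$ via the stronger fact that $\Delta(N_s,N_{s'})=2^{n-2}$ for \emph{every} pair $s\neq s'$. Writing $k_s,k_{s'}$ for the sizes of the two disjoint parent sets, a row disagrees between columns $s$ and $s'$ exactly when the projection onto $Pa(N_s,V_n)$ lands in the $1$-block of $N_s$ while that onto $Pa(N_{s'},V_n)$ lands in the $0$-block of $N_{s'}$, or vice versa. Disjointness makes these projections range independently, and balance gives each relevant block size $2^{k_s-1}$ resp.\ $2^{k_{s'}-1}$; counting the remaining free attributes yields $\Delta(N_s,N_{s'})=2^{\,n-1-k_s-k_{s'}}\cdot 2^{\,k_s+k_{s'}-1}=2^{n-2}$, independent of $s,s'$. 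Summing over the $t-1$ choices of $s'\neq s$ gives $f_T(N_s)=(t-1)2^{n-2}$ for every $s$, which settles the first part.

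Next I would pin down the optimum $N$. A complete CPT with the full parent set $\{V_1,\dots,V_{n-1}\}$ can realize an arbitrary $0/1$ labelling of the $2^{n-1}$ rows, so an optimal $N$ simply takes the majority vote in each row, whence $f_T(N)=\sum_\mu \min(\#0\text{s},\#1\text{s in row }\mu)$. Balance together with disjointness shows the voting configurations are \emph{uniformly} distributed: each of the $2^t$ bit strings of length $t$ occurs in exactly $2^{\,n-1-t}$ rows. Hence $f_T(N)=2^{\,n-1-t}\,S(t)$ with $S(t)=\sum_{w=0}^{t}\binom{t}{w}\min(w,t-w)$, and the approximation ratio collapses to the purely combinatorial quantity $f_T(N_s)/f_T(N)=(t-1)2^{t-1}/S(t)$.

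It then remains to prove $(t-1)2^{t-1}\le\tfrac43 S(t)$ for all $t\ge 3$. Using $\min(w,t-w)=\tfrac t2-\tfrac12|t-2w|$ and the closed form $\sum_w\binom tw|t-2w|=2\lceil t/2\rceil\binom{t}{\lfloor t/2\rfloor}$, this reduces to the single inequality
\[
\lceil t/2\rceil\binom{t}{\lfloor t/2\rfloor}\ \le\ (t+3)\,2^{t-3}.
\]
I expect this last step to be the main obstacle, precisely because it is \emph{tight} at $t=3$ (both sides equal $6$, which is exactly where the ratio $4/3$ is attained), so no slack is available and the usual loose bounds on the central binomial coefficient fail there. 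I would therefore treat $t=3$ as an exact base case and show that $\rho(t):=\lceil t/2\rceil\binom{t}{\lfloor t/2\rfloor}/\big((t+3)2^{t-3}\big)$ is maximized at $t=3$. A short computation with the closed form gives $\rho(2m)/\rho(2m-1)=\tfrac{2m+2}{2m+3}<1$ and $\rho(2m+1)/\rho(2m-1)=\tfrac{(2m+2)(2m+1)}{2m(2m+4)}<1$ for $m\ge 2$, so $\rho$ is dominated along both parities by $\rho(3)=1$. This yields $\rho(t)\le 1$ for all $t\ge 3$, completing the bound $f_T(N_s)\le\tfrac43 f_T(N)$.
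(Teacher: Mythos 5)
Your proposal is correct and follows essentially the same route as the paper: balance of the symmetric columns plus disjointness of the parent sets give $\Delta(N_s,N_{s'})=2^{n-2}$ for every pair and a uniform distribution of the $2^t$ voting configurations over the rows, the optimum is the row-wise majority, and the whole claim reduces to the central-binomial inequality $\lceil t/2\rceil\binom{t}{\lfloor t/2\rfloor}\le (t+3)\,2^{t-3}$, which is exactly the paper's $c\binom{2c-1}{c}\le(c+1)2^{2c-3}$ (and its even-$t$ analogue) in disguise. The only difference is in how that last inequality is proved --- you show the ratio $\rho(t)$ is dominated by $\rho(3)=1$ via the parity-wise ratio computations, while the paper uses a double induction --- and both arguments are sound.
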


The proof of Theorem~\ref{thm:symmetric} is sketched via Lemmas \ref{lem:opt}--\ref{lem:even_t}, with details given in the appendix. Note that Lemmas~\ref{lem:SameCount}--\ref{lem:even_t} assume the same premises as Theorem~\ref{thm:symmetric}.

\begin{lemma}
\label{lem:opt}
Let $T'=(N'_1,\ldots,N'_{t})$ be any problem instance of $t$ CPTs. Assuming each of the $2^t$ voting configurations occurs exactly once in $M(T')$, and $N$ is an optimal solution for $T'$, we have 
\begin{equation*}
f_{T'}(N)=\begin{cases}
          2 \cdot \sum_{\kappa = 0}^{c-1} \kappa \binom{2c-1}{\kappa} &\text{if }t=2c-1 \\
          2 \cdot \sum_{\kappa = 0}^{c-1} \kappa \binom{2c}{\kappa} + c \binom{2c}{c}  &\text{if }t=2c \\
     \end{cases}
\end{equation*}
\end{lemma}

\begin{proof}
Each possible voting configuration can be represented by some bit string of length $t$. Consider a voting configuration with $\kappa$ CPTs voting $0 \succ 1$ and $t-\kappa$ CPTs voting $1 \succ 0$, $\kappa \le t-\kappa$. Clearly, $N$ makes $\kappa$ errors on this, $0 \le \kappa \le \lfloor \frac{t}{2} \rfloor $. The total error of $N$ on all voting configurations where $0 \succ 1$ is in the minority is $\sum_{\kappa=0}^{\lfloor \frac{t}{2} \rfloor} \kappa \cdot \binom{t}{\kappa} $. Accounting for those where $1 \succ 0$ is in the minority, and (in case $t$ is even) those where $0 \succ 1$ and $1 \succ 0$ are equally frequent, we obtain the desired expression for $f_{T'}(N)$.
\end{proof}

\begin{lemma}
\label{lem:SameCount}
Each of the $2^t$ voting configurations of $T$ is the row vector for exactly $2^{n-t-1}$ swaps.
\end{lemma}

\begin{proof}
Let $\mathcal{U} = (\mathcal{V} \setminus\{V_n\})\setminus \bigcup_{1 \le s \le t} Pa(N_s,V_n)$. Consider a bit string of length $t$ where the $s$-th bit represents the ordering entailed by $N_s$. Assume the $s$-th bit is $0$. Given symmetry, we know that exactly $2^{p_s-1}$ contexts of $Pa(N_s,V_n)$ entail $0 \succ 1$, where $p_s=|Pa(N_s,V_n)|$. Thus, $2^{p_s-1}$ of the voting configurations have $0$ in the $s$-th bit. Extending this argument, a given voting configuration can be generated by $\prod_{s=1}^t 2^{p_s-1}$ contexts. Each such bit string also occurs once for each context of $\mathcal{U}$. Thus, $2^{|\mathcal{U}|} \prod_{s=1}^t 2^{p_s-1}$ swaps of $V_n$ have this given voting configuration. Since all input parent sets are pairwise disjoint, this 
simplifies to $2^{n-t-1}$.
\end{proof}

The next four lemmas are stated, with proof details given in the appendix.

\begin{lemma}
Suppose each of the $2^t$ voting configurations occurs for the same number of swaps. Let $N$ be an optimal solution for $T$. Then 
\begin{equation*}
f_T(N)=\begin{cases}
          t \cdot 2^{n-2}-2^{n-t-1} \cdot c \binom{2c-1}{c} &\text{if } t=2c-1 \\
          t \cdot 2^{n-2}-2^{n-t-1} \cdot c \binom{2c}{c} &\text{if } t=2c \\
     \end{cases}
\end{equation*}
\end{lemma}

For a problem instance $T$ under our premises, this yields combinatorial expressions to compute the error made by an optimal CPT. Next, we give an expression for the error made by any of the input CPTs $N_s$, which we claim to be at most $4/3$ times the objective value of an optimal solution. 

\begin{lemma}
$f_T(N_s) = (t-1)\cdot2^{n-2}$ for all $s\in\{1,\ldots,t\}$.
\end{lemma}

\begin{lemma}
\label{lem:odd_t}
If $1\le s\le t=2c-1$, $c>1$ then $\frac{f_T(N_s)}{f_T(N)} \le \frac{4}{3}$.
\end{lemma}

\begin{lemma}
\label{lem:even_t}
If $1\le s\le t=2c$, $c>1$ then $\frac{f_T(N_s)}{f_T(N)} \le \frac{4}{3}$.
\end{lemma}

This finally completes the proof of Theorem~\ref{thm:symmetric}.

\section{Optimal Solution for a Fixed Parent Set}

Theorem~\ref{thm:2approximation} states that the \emph{best input CPT}\/ is never worse than the optimal solution by more than a factor of 2. This raises the question whether the \emph{best input parent set}\/ yields better guarantees on the approximation ratio. To this end, instead of taking the best input CPT unmodified as an aggregate output, we propose Algorithm~\ref{alg:approx}  as an approximation algorithm.

\begin{algorithm}[t]
    \SetKwInOut{Input}{Input}
    \SetKwInOut{Output}{Output}
    \Input{A tuple $T=(N_1,\ldots,N_t)$ of CPTs for $V_n$}
    \Output{$\CPT(N^a,V_n)$ with $Pa(N^a,V_n) \subseteq Pa(N_s,V_n)$ for some $s\in\{1,\ldots,t\}$}
    \caption{Build $\CPT(N^a,V_n)$ that minimizes $f_T$ subject to $Pa(N^a,V_n)\subseteq  Pa(N_s,V_n)$ for some $s\in\{1,\ldots,t\}$}
    \label{alg:approx}
    \For {each $s\in\{1,\ldots,t\}$}
    {
       compute a CPT $N^*_s$ that minimizes $f_T$ among all CPTs with parent set contained in $Pa(N_s,V_n)$
    }
    $N^a=N^*_{s^*}$ where $f_T(N^*_{s^*})=\min \{f_T(N^*_s)\mid 1\le s\le t\}$.
\end{algorithm}

Clearly, Algorithm~\ref{alg:approx} cannot produce worse outputs than the trivial algorithm that simply selects the best input CPT. What still needs to be addressed is (i) can this algorithm be designed to run in polynomial time?, and (ii) what approximation ratio can this algorithm achieve (under which circumstances)? To address (i), all that is needed is a polynomial-time algorithm that, given $T=(N_1,\ldots,N_t)$ and a parent set $P\in\{Pa(N_s,V_n)\mid 1\le s\le t\}$, produces a CPT $N^*_s$ that minimizes $f_T$ among all CPTs with parent set $Pa(N_s,V_n)$. With Algorithm~\ref{alg:NewOptimal}, we provide an algorithm that solves a more general problem: given $T=(N_1,\ldots,N_t)$ and \emph{any}\/ parent set $P\subseteq\bigcup_{1\le s\le n}Pa(N_s,V_n)$, it produces a CPT $N^*_s$ that minimizes $f_T$ among all CPTs with parent set contained in $P$. We will see below, that this algorithm runs in time polynomial in the size of its input, when the cardinality of $P$ is bounded by the size of the largest input parent set.

Given $T$ and $P$ as input, Algorithm \ref{alg:NewOptimal} computes, for each context $\gamma$ of $P$, the frequency of both possible preference orderings, and assigns the ordering $0\succ 1$ if it is in the majority, $1\succ 0$ otherwise. Lines 7-11 iterate over each $N_s$ in $T$, and count the number of rules in $\CPT(N_s,V_n)$ that would apply to a swap consistent with context $\gamma$ and  entail $0 \succ 1$. This count is multiplied by the number of swaps ordered by each rule in $CPT(N_s,V_n)$ and then divided by the number of possible contexts for $P \setminus Pa(N_s,V_n)$, since $P$ is fixed to $\gamma$. At each iteration,  \textit{zerovotes}\/ has the number of swaps for which $N_s$ entails $0 \succ 1$ given $\gamma$. In line 12, \textit{zerovotes}\/ has the total number of swaps for which $0 \succ 1$ is entailed given $\gamma$, summed over all input CPTs. The number of votes for $1 \succ 0$ is then found by subtracting this from the total number of votes given $\gamma$. Lines 13-17 then assign the ordering  $0\succ 1$ if it is in the majority, $1\succ 0$ otherwise. Repeating this for each possible context of $P$, we obtain $\CPT(N^a,V_n)$ with $2^{|P|}$ rules. Some of the attributes in $P$ might be \emph{irrelevant}\/ parents for this CPT in the sense that their value does not affect the preference order of the CPT, see \cite{koriche2010learning,allen2015cp}. Removing these attributes yields a more compact representation semantically equivalent to $N^a$. This can be done in time linear in the size of $N^a$ and quadratic in $n$  \cite{ali2021aggregating}.

\begin{algorithm}[t]
    \SetKwInOut{Input}{Input}
    \SetKwInOut{Output}{Output}
    \Input{$T=(N_1,\ldots,N_t)$, $P\subseteq\{V_1,\ldots,V_{n-1}\}$}
    \Output{An optimal $\CPT(N^a,V_n)$ wrt $f_T$ with $Pa(N^a,V_n) \subseteq P$}
    \caption{Build $\CPT(N^a,V_n)$ that minimizes $f_T$ with $Pa(N^a,V_n)\subseteq P$ for given $T$ and $P$}
    \label{alg:NewOptimal}
    \For {each $\gamma \in \Inst(P)$}
    {
        $zerovotes = 0$\\
        \For {each $s\in\{1,\ldots,t\}$}
        {
            $numRules$ = the number of rules in $\CPT(N_s,V_n)$ voting $0 \succ 1$ with contexts consistent with $\gamma$
            
            $numSwaps = numRules \cdot 2^{n-|Pa(N_s,V_n)|-1}$
            
            $numSwaps = numSwaps / 2^{|P-Pa(N_s,V_n)|}$
            
            $zerovotes = zerovotes+numSwaps$
        }
        $onevotes = t \cdot 2^{n-|P|-1} - zerovotes$\\
        \If {$zerovotes > onevotes$}
        {
           add $\gamma:0 \succ 1$ to $\CPT(N^a,V_n)$
        }
        \Else
        {
          add $\gamma:1 \succ 0$ to $\CPT(N^a,V_n)$
        }
      
      remove irrelevant parents from $\CPT(N^a,V_n)$
    }
\end{algorithm}

First, we show that Algorithm~\ref{alg:NewOptimal} is correct, i.e., it produces the optimal aggregate CPT with the given parent set.

\begin{theorem}
\label{thm:newalgooptimal}
Algorithm \ref{alg:NewOptimal} constructs $\CPT(N^a,V_n)$ such that $f_{T}(N^a) \le f_{T}(N)$ for all $N$ with $Pa(N,V_n)\subseteq P$.
\end{theorem}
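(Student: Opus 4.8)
The plan is to exploit the fact that, once the parent set is restricted to lie within $P$, the objective $f_T$ decomposes additively over the contexts of $P$, and that Algorithm~\ref{alg:NewOptimal} optimizes each context independently. First I would partition the $2^{n-1}$ swaps over $V_n$ into $2^{|P|}$ blocks, one per context $\gamma\in\Inst(P)$, where the block of $\gamma$ consists of all swaps $(o,o')$ with $o[V]=o'[V]=\gamma[V]$ for every $V\in P$ (each block holding $2^{n-1-|P|}$ swaps). Since $\Delta(N,N_s)$ merely counts disagreeing swaps and each swap lies in exactly one block, the objective splits as $f_T(N)=\sum_{\gamma\in\Inst(P)}C_\gamma(N)$, where $C_\gamma(N)$ is the number of (swap, input-CPT) pairs in the block of $\gamma$ on which $N$ disagrees with the input. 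The key structural observation is that any $N$ with $Pa(N,V_n)\subseteq P$ entails the \emph{same}\/ ordering on all swaps of a single block, because that ordering depends only on the values of $Pa(N,V_n)$, which $\gamma$ already fixes. Writing $z_\gamma$ and $o_\gamma$ for the number of $0\succ1$ and $1\succ0$ input votes in the block of $\gamma$, it follows that $C_\gamma(N)$ equals $o_\gamma$ if $N$ orders $0\succ1$ on $\gamma$ and equals $z_\gamma$ otherwise; in either case $C_\gamma(N)\ge\min(z_\gamma,o_\gamma)$.

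Second I would verify that \textit{zerovotes}\/ and \textit{onevotes}\/ computed in the inner loop equal $z_\gamma$ and $o_\gamma$. The crux is the swap count in Lines 5--6: a rule $\delta:0\succ1$ of $\CPT(N_s,V_n)$ whose context $\delta$ is consistent with $\gamma$ contributes exactly the block-swaps that additionally agree with $\delta$ on $Pa(N_s,V_n)$, of which there are $2^{n-1-|P\cup Pa(N_s,V_n)|}$, and distinct rules contribute disjoint swap sets. Thus multiplying \textit{numRules}\/ by $2^{n-|Pa(N_s,V_n)|-1}$ and dividing by $2^{|P\setminus Pa(N_s,V_n)|}$ yields precisely \textit{numRules}\/ times $2^{n-1-|P\cup Pa(N_s,V_n)|}$, where I use the disjointness identity $|Pa(N_s,V_n)|+|P\setminus Pa(N_s,V_n)|=|P\cup Pa(N_s,V_n)|$. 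Summing over $s$ shows that \textit{zerovotes}\/ equals $z_\gamma$, and since each block carries $t\cdot2^{n-1-|P|}$ votes in total, Line 12 gives \textit{onevotes}\/ $=o_\gamma$.

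Finally I would note that the conditional in Lines 13--17 selects, for each $\gamma$, the ordering whose disagreement count is the smaller of $z_\gamma$ and $o_\gamma$, so that $N^a$ (whose parent set is $P$ before pruning irrelevant parents, hence free to choose each block's ordering independently) attains $C_\gamma(N^a)=\min(z_\gamma,o_\gamma)$ for every context $\gamma$. Summing over all contexts yields $f_T(N^a)=\sum_\gamma\min(z_\gamma,o_\gamma)\le\sum_\gamma C_\gamma(N)=f_T(N)$ for every $N$ with $Pa(N,V_n)\subseteq P$; and since deleting irrelevant parents leaves $N^a$ semantically unchanged, it does not affect $f_T(N^a)$. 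I expect the main obstacle to be the bookkeeping of the second step: rigorously justifying the per-rule swap count and the disjointness of contributions across rules and across the input CPTs, so that the algorithm's arithmetic provably coincides with the true vote frequencies $z_\gamma$ and $o_\gamma$.
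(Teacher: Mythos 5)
Your proof is correct and rests on the same core idea as the paper's: the objective decomposes additively over the contexts of $P$, any CPT with parent set contained in $P$ is constant on each such block, and the algorithm's majority choice attains the per-block minimum $\min(z_\gamma,o_\gamma)$. The paper phrases this as a proof by contradiction (extracting a single context on which a supposedly better $N$ would violate the majority rule) and takes the correctness of the computed vote frequencies for granted, whereas you argue directly and also verify the swap-counting arithmetic of the inner loop; this makes your write-up somewhat more detailed but not different in substance.
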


\begin{proof}
Assume there is some $N$, $Pa(N,V_n)=P_N\subseteq P$, such that $f_T(N^a) > f_T(N)$. Then there exists $\gamma \in \Inst(P_N)$ for which $N$ and $N^a$ entail different orders, and $N$ disagrees with $T$ on fewer swaps consistent with $\gamma$ than $N^a$ does. Now $\gamma$ corresponds to some sub-matrix $M'$. The corresponding sub-matrices for $N$ and $N^a$ contain a constant value each---one of them 0, the other 1. Wlog, assume $N^a$ has an all-ones sub-matrix in place of $M'$. By Algorithm~\ref{alg:NewOptimal}, this implies $freq_{M'}(0 \succ 1) \le freq_{M'}(1 \succ 0)$. By our assumption, $N'$ now has the all-zeros sub-matrix in place of $M'$, and this all-zeros matrix has fewer inconsistencies with $M'$ than $N^a$'s all-ones matrix. From this, we have $freq_{M'}(0 \succ 1) > freq_{M'}(1 \succ 0)$---a contradiction.
\end{proof}

Second, Algorithm \ref{alg:NewOptimal} runs in polynomial time, when $|P|$ is bounded by the size of the largest input parent set.

\begin{theorem}
\label{thm:complexity} Algorithm \ref{alg:NewOptimal} runs in time $O(2^{|P|} \cdot \sum_{N_s \in T} |\CPT(N_s,V_n)|)$. 
In particular, if $|P|\le\max\{|Pa(N_s,V_n)\mid 1\le i\le t\}$, it runs in time polynomial in $\sum_{1\le s\le t} |\CPT(N_s,V_n)|$. Moreover, Algorithm~\ref{alg:approx} runs in polynomial time.
\end{theorem}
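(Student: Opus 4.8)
The plan is to prove the three assertions in turn: the first running-time bound by a direct accounting of the nested loops of Algorithm~\ref{alg:NewOptimal}, and the last two as consequences that hinge on the completeness assumption on CPTs.

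First I would bound the running time of Algorithm~\ref{alg:NewOptimal}. The outer \textbf{for} loop executes once per context $\gamma\in\Inst(P)$, i.e.\ $2^{|P|}$ times. For a fixed $\gamma$, the inner loop over $s\in\{1,\ldots,t\}$ reads off, in line~4, the number of rules of $\CPT(N_s,V_n)$ whose context is consistent with $\gamma$ and votes $0\succ1$; this is obtained by a single scan of $\CPT(N_s,V_n)$, at cost $O(|\CPT(N_s,V_n)|)$, treating each per-rule consistency test and the constant number of arithmetic operations in lines~5--8 as $O(1)$ word-RAM steps (the integers involved never exceed $t\cdot 2^{n-1}$, hence fit in $O(n+\log t)$ bits). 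Summing over $s$, one pass of the inner loop costs $O(\sum_{N_s\in T}|\CPT(N_s,V_n)|)$, and multiplying by the $2^{|P|}$ outer iterations gives the claimed bound. I would then note that assembling the $2^{|P|}$ rules of $N^a$ and removing irrelevant parents (cost linear in $|N^a|=2^{|P|}$ and quadratic in $n$, by the cited result) are dominated by this term.

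Next comes the polynomial-time specialization, which is the conceptual crux. The point is to convert the exponential factor $2^{|P|}$ into the input size. Let $s^{*}$ attain $\max\{|Pa(N_s,V_n)|\mid 1\le s\le t\}$. Since every CPT is \emph{complete}, $|\CPT(N_{s^{*}},V_n)|=2^{|Pa(N_{s^{*}},V_n)|}$, so the hypothesis $|P|\le|Pa(N_{s^{*}},V_n)|$ yields $2^{|P|}\le 2^{|Pa(N_{s^{*}},V_n)|}=|\CPT(N_{s^{*}},V_n)|\le\sum_{N_s\in T}|\CPT(N_s,V_n)|$. Substituting into the bound from the first part gives a running time of $O\bigl((\sum_{N_s\in T}|\CPT(N_s,V_n)|)^{2}\bigr)$, which is polynomial in $\sum_{1\le s\le t}|\CPT(N_s,V_n)|$, as required. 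I would emphasize that this reduction is exactly where completeness is used: without it one could have $|\CPT(N_{s^{*}},V_n)|\ll 2^{|Pa(N_{s^{*}},V_n)|}$ and the $2^{|P|}$ factor would no longer be controlled by the input size.

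Finally, for Algorithm~\ref{alg:approx}: it invokes Algorithm~\ref{alg:NewOptimal} exactly $t$ times, once with $P=Pa(N_s,V_n)$ for each $s$, and every such $P$ satisfies $|P|\le\max_{s'}|Pa(N_{s'},V_n)|$, so each call runs in polynomial time by the previous paragraph (and, by Theorem~\ref{thm:newalgooptimal}, returns the optimal CPT with parent set contained in $Pa(N_s,V_n)$, of size at most $2^{|Pa(N_s,V_n)|}\le\sum_{N_s\in T}|\CPT(N_s,V_n)|$). It then evaluates $f_T(N^{*}_s)$ for each $s$, which is polynomial by the remark following the definition of $f_T$, and takes a minimum over $t$ numbers. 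A sum of $t$ polynomial-time steps is polynomial, completing the proof. The main obstacle is precisely the reduction in the second paragraph; the remainder is routine loop-counting, the only care needed being to fix a cost model in which the per-rule consistency checks and the counter arithmetic are genuinely $O(1)$ (or, absorbing polynomial bit-complexity factors, still polynomial).
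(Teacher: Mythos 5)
Your proof is correct and follows essentially the same route as the paper's: bound the inner loop by $O(\sum_{1\le s\le t}|\CPT(N_s,V_n)|)$, multiply by the $2^{|P|}$ outer iterations, and account for the removal of irrelevant parents. The paper dismisses the remaining two claims as following ``immediately,'' whereas you spell out the key point --- that completeness of the CPTs gives $2^{|P|}\le|\CPT(N_{s^*},V_n)|\le\sum_s|\CPT(N_s,V_n)|$ --- which is exactly the intended justification.
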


\begin{proof}
The total runtime of the inner loop (lines 7-11 iterated) is in $O(\sum_{1\le s\le t} |CPT(N_s,V_n)|)$, which is linear in input size. The inner loop runs once per $\gamma \in\Inst(P)$, i.e., $2^{|P|}$ times. Since removing irrelevant parents can be done in time linear in the size of $N^a$ and quadratic in $n$, this yields a runtime in $O(2^{|P|} \cdot \sum_{1\le s\le t} |CPT(N_s,V_n)| )$. The remaining statements of the theorem follow immediately.
\end{proof}

Thus Algorithm~\ref{alg:approx} is an efficient method providing the optimal CPT wrt any of the input parent sets. It clearly cannot produce an output worse than the trivial algorithm, which simply outputs the best input CPT. To demonstrate that it can do substantially better, we define a family of input instances that provides useful insights into CPT aggregation.

\begin{definition}
For $n\ge 3$ and $k\in\{2,\ldots,n-1\}$ define $T^{k,n}=(N^{k,n}_1,\ldots,N^{k,n}_t)$ as follows. Let $t=\binom{n-1}{k}2^k$. 
Then each number $s\in\{1,\ldots,t\}$ corresponds to a unique pair $(P,\gamma)$, where $P$ is a $k$-element subset of $\mathcal{V}\setminus{V_n}$ and $\gamma$ is a context over $P$. Now let $N^{k,n}_s$ be the CPT with rules $\gamma:1\succ 0$, and $\gamma':0\succ 1$ for all contexts $\gamma'\in\Inst(P)\setminus\{\gamma\}$.
\end{definition}

For the case $k=n-1$, $(T^{k,n})_{k,n}$ was already mentioned by \citet{endriss2014binary}. 
It will turn out (Theorem~\ref{thm:optimalApproximation}) that Algorithm~\ref{alg:approx} produces an optimal solution for $T^{k,n}$ when $n\ge 3$ and $2\le k\le n-1$. By contrast, for $k=n-1$, \citet{endriss2014binary} proved that the trivial algorithm, which outputs the best input CPT, cannot obtain an approximation ratio better than 2 for the family $(T^{n-1,n})_{n\ge 3}$  (Theorem~\ref{thm:2-epsilon}). Moreover, we will argue that the trivial algorithm provides solutions whose objective value is at least a factor of $3/2$ above the optimum (Theorem~\ref{thm:k=2}).

\begin{theorem}\label{thm:optimalApproximation}
    Let $n\ge 3$ and $k\in\{2,\ldots,n-1\}$. Then Algorithm~\ref{alg:approx} outputs an optimal solution for $T^{k,n}$.
\end{theorem}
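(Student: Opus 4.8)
The plan is to show that the family $T^{k,n}$ has an exceptionally simple optimal solution---namely the separable CPT that votes $0\succ 1$ on every swap---and that this solution lies within the search space of Algorithm~\ref{alg:approx}.

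First I would determine, for each swap over $V_n$, how the $t$ input CPTs vote. Identify a swap with the assignment $x\in\{0,1\}^{n-1}$ that its two outcomes share on $V_1,\dots,V_{n-1}$. An input CPT indexed by the pair $(P,\gamma)$ votes $1\succ 0$ on this swap precisely when $x$ restricted to $P$ equals $\gamma$, and votes $0\succ 1$ otherwise. The crucial observation is that, because every $k$-element subset $P$ of $\mathcal{V}\setminus\{V_n\}$ occurs (paired with all $2^k$ of its contexts), for each such $P$ there is exactly one context, namely $\gamma=x|_P$, that triggers a $1\succ 0$ vote. Hence the number of input CPTs voting $1\succ 0$ on the swap equals the number of $k$-subsets, $\binom{n-1}{k}$, \emph{independently}\/ of $x$; the remaining $\binom{n-1}{k}(2^k-1)$ CPTs vote $0\succ 1$.

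Next I would use $k\ge 2$ to conclude optimality of the separable CPT. Since $2^k-1>1$ for $k\ge 2$, on every swap the preference $0\succ 1$ is the strict majority, so the per-swap error of any CPT is at least $\binom{n-1}{k}$, with equality iff it orders that swap $0\succ 1$. Summing over the $2^{n-1}$ swaps, every CPT $N$ satisfies $f_T(N)\ge 2^{n-1}\binom{n-1}{k}$, and this bound is attained exactly by the (unique) separable CPT $N^{\mathrm{sep}}$ whose single rule is $0\succ 1$. Thus $N^{\mathrm{sep}}$ is a globally optimal solution for $T^{k,n}$, and it has empty parent set.

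Finally I would relate this to Algorithm~\ref{alg:approx}. For every input index $s$ we have $\emptyset\subseteq Pa(N^{k,n}_s,V_n)$, so $N^{\mathrm{sep}}$ is among the CPTs considered when Algorithm~\ref{alg:NewOptimal} is run with $P=Pa(N^{k,n}_s,V_n)$. By Theorem~\ref{thm:newalgooptimal}, the CPT $N^*_s$ returned in that iteration satisfies $f_T(N^*_s)\le f_T(N^{\mathrm{sep}})=2^{n-1}\binom{n-1}{k}$, and since this value is a global lower bound we obtain equality. Hence the output $N^a$ of Algorithm~\ref{alg:approx} also attains $f_T(N^a)=2^{n-1}\binom{n-1}{k}$, i.e.\ it is optimal. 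The only real work lies in the uniform counting argument of the first step; once the majority preference is shown to be the same constant $0\succ 1$ on every swap, the remaining steps are immediate. A minor point to handle cleanly is that the theorem requires $k\ge 2$: this is exactly what makes the majority strict, and is what fails (a tie arises) when $k=1$.
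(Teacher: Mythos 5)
Your proof is correct and follows essentially the same route as the paper's: show that $0\succ 1$ is the strict majority vote on every swap (using $k\ge 2$), conclude that the separable $0\succ 1$ CPT is a global optimum, and then argue that Algorithm~\ref{alg:approx} attains that optimum. The only cosmetic difference is that the paper re-traces Algorithm~\ref{alg:NewOptimal}'s per-submatrix frequency comparison to see that it outputs the separable CPT, whereas you invoke Theorem~\ref{thm:newalgooptimal} together with $\emptyset\subseteq Pa(N^{k,n}_s,V_n)$; your explicit count of $\binom{n-1}{k}$ versus $\binom{n-1}{k}(2^k-1)$ votes per swap also makes the majority claim and the lower bound $2^{n-1}\binom{n-1}{k}$ slightly more transparent.
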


\begin{proof}
    Each $k$-element subset $\mathcal{V'}\subseteq \mathcal{V}\setminus{V_n}$ is the parent set of $2^k$ input CPTs. By definition of $T^{k,n}$, for any context over any $\mathcal{V'}$, all but one of the CPTs with  $\mathcal{V'}$ as parent set entail $0 \succ 1$. Thus, for all swaps of $V_n$, $0 \succ 1$ is the majority ordering and the optimal solution is the separable $0 \succ 1$. Now assume $Pa(N^{k,n}_s,V_n)$ is the input to Algorithm \ref{alg:NewOptimal}, for some $s\in\{1,\ldots,\binom{n-1}{k}2^k$. Consider the $2^k$ sub-matrices corresponding to contexts of $Pa(N^{k,n}_s,V_n)$, each with $2^{n-k-1}$ rows. On each row, $0 \succ 1$ occurs more often than $1 \succ 0$. So, over each sub-matrix $M$, $freq_M(0 \succ 1)>freq_M(1 \succ 0)$. Thus Algorithm \ref{alg:NewOptimal} outputs the optimal separable $0 \succ 1$, and Algorithm \ref{alg:approx} outputs an optimal solution.
\end{proof}

\begin{theorem}[cf. \cite{endriss2014binary}]\label{thm:2-epsilon}
    Let $\varepsilon >0$ be any positive real number. Then there is some $n\ge 3$ such that 
    $$f_{T^{n-1,n}}(N^{n-1,n}_s)> (2-\varepsilon) f_{T^{n-1,n}}(N)$$
    for all $1\le s\le2^{n-1}$ ($=\binom{n-1}{n-1}2^{n-1}$), where $N$ is any optimal solution for $T^{n-1,n}$.
\end{theorem}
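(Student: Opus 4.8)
The plan is to make everything explicit through the matrix representation of $T^{n-1,n}$ and then reduce the claim to a one-line comparison of two closed-form quantities. First I would record the structure of the instance for $k=n-1$: here the only $k$-element subset of $\mathcal{V}\setminus\{V_n\}$ is the full set $\{V_1,\ldots,V_{n-1}\}$, so $t=\binom{n-1}{n-1}2^{n-1}=2^{n-1}$, and each index $s$ corresponds to a distinct context which (because the parent set is full) is in bijection with a single swap over $V_n$. By definition, $N^{n-1,n}_s$ votes $1\succ 0$ on that one swap and $0\succ 1$ on all remaining $2^{n-1}-1$ swaps. In matrix form this means $M(T^{n-1,n})$ is the $2^{n-1}\times 2^{n-1}$ identity matrix: the entry in row $\mu$, column $\nu$ equals $1$ exactly when $\mu=\nu$.

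Next I would pin down the optimum. Since each row (swap) carries exactly one vote for $1\succ 0$ and $2^{n-1}-1$ votes for $0\succ 1$, the majority ordering on every swap is $0\succ 1$, and by Theorem~\ref{thm:optimalApproximation} (together with its proof) an optimal solution $N$ is the separable CPT entailing $0\succ 1$ on every swap. Its objective value is obtained by counting, for each input CPT $N_s$, the number of swaps on which $N_s$ disagrees with $N$; this is precisely the number of $1$-entries in column $s$ of $M$, namely one. Summing over the $t=2^{n-1}$ input CPTs gives $f_{T^{n-1,n}}(N)=2^{n-1}$, and this value is the same for any optimal $N$.

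I would then compute $f_{T^{n-1,n}}(N^{n-1,n}_s)$ for an arbitrary input CPT by summing the pairwise disagreements $\Delta(N_s,N_\nu)$ over $\nu$. For $\nu=s$ the disagreement is $0$; for $\nu\neq s$ the two corresponding columns of the identity matrix differ in exactly two positions (rows $s$ and $\nu$), so $\Delta(N_s,N_\nu)=2$. Hence $f_{T^{n-1,n}}(N^{n-1,n}_s)=2(2^{n-1}-1)$, independently of $s$.

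Finally I would assemble the ratio $f_{T^{n-1,n}}(N^{n-1,n}_s)/f_{T^{n-1,n}}(N)=2(2^{n-1}-1)/2^{n-1}=2-2^{-(n-2)}$ and solve $2-2^{-(n-2)}>2-\varepsilon$, i.e.\ $2^{n-2}>1/\varepsilon$; choosing any $n\ge 3$ with $n>2+\log_2(1/\varepsilon)$ then yields the claim simultaneously for all $s$. There is no genuine obstacle here: the only points requiring care are confirming that the separable all-$0\succ 1$ CPT is actually optimal (which follows from the row-wise decomposition of $f_T$ into per-swap disagreement counts together with Theorem~\ref{thm:optimalApproximation}), and correctly observing that two distinct input CPTs disagree on exactly two swaps rather than one, which is what makes the input value almost twice the optimum.
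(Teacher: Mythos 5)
Your proof is correct and follows essentially the same route the paper relies on: the paper leaves this theorem to the cited reference, but its own general formulas (Lemmas~\ref{lem:opt2} and~\ref{lem:apr}) specialize at $k=n-1$ to exactly your values $f_{T^{n-1,n}}(N)=2^{n-1}$ and $f_{T^{n-1,n}}(N^{n-1,n}_s)=2^n-2$, from which the choice $n>2+\log_2(1/\varepsilon)$ gives the claim. Your identity-matrix picture and the observation that two distinct input columns differ in exactly two rows are the right key points.
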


\begin{theorem}\label{thm:k=2}
    Let $n\ge 3$, $k\in\{2,\ldots,n-1\}$, and $1\le s\le \binom{n-1}{k}2^k$. Let $N$ be an optimal solution for $T^{k,n}$. Then $f_{T^{k,n}}(N^{k,n}_s)\ge (3/2) f_{T^{k,n}}(N)$.
\end{theorem}

In order to prove this theorem, we will need to establish two helpful lemmas.

\begin{lemma}\label{lem:opt2}
    Let $N$ be an optimal solution for $T^{k,n}$. Then $f_{T^{k,n}}(N) = 2^{n-1} \binom{n-1}{k}$.
\end{lemma}

\begin{proof}
    Each $N^{k,n}_2$ in $T^{k,n}$ entails $1 \succ 0$ for one rule applying to $2^{n-k-1}$ swaps. By the proof of Theorem \ref{thm:optimalApproximation}, $N$ is the separable $0 \succ 1$, implying $\Delta(N,N^{k,n}_s)=2^{n-k-1}$. Summing up over all $t$ values of $s$ proves the claim.
\end{proof}

\begin{lemma}\label{lem:apr}
    Let $1\le s\le \binom{n-1}{k}2^k$. Then $f_{T^{k,n}}(N^{k,n}_s)= (2^n - 2^{n-k}) \sum_{k'=0}^{k} \binom{k}{k'} \binom{n-k-1}{k-k'}$.
\end{lemma}

\begin{proof}
Note that $f_{T^{k,n}}(N^{k,n}_s)=\sum_{s'\ne s}\Delta(N^{k,n}_s,N^{k,n}_{s'})$. The value $\Delta(N^{k,n}_s,N^{k,n}_{s'})$ depends on $|Pa(N^{k,n}_s,V_n) \cap Pa(N^{k,n}_{s'},V_n)|=:k'$, as well as the contexts $\gamma_1$ and $\gamma_2$ for which $N^{k,n}_s$ and $N^{k,n}_{s'}$, resp., entail $1 \succ 0$. Let $P$ denote $Pa(N^{k,n}_{s'},V_n)$ for \emph{some}\/ $s'\ne s$, $1\le s'\le \binom{n-1}{k}2^k$.

Given $0 \le k' \le k$, there are $\binom{k}{k'} \binom{n-k-1}{k-k'}$ parent sets $P$ of size $k$ such that $Pa(N^{k,n}_s,V_n) \cap P =k'$. For each such $P$, if $k'=k$, the tuple $T$ has $2^{k}-1$ CPTs other than $N^{k,n}_s$ that have parent set $P$. For $k'<k$, $T$ has $2^k$ CPTs with parent set $P$.  
While it is not necessary to treat the cases $k'=k$ and $k'=0$ separately, we still do so, as it may help the reader better understand our argument for general values of $k'$.

    For $k'=k$ there are $2^k$ contexts of $Pa(N^{k,n}_s,V_n) \cup P$ ($=Pa(N^{k,n}_s,V_n)$). By construction, $N^{k,n}_s$ and $N^{k,n}_{s'}$ disagree on $2$ contexts and thus on $2^{n-k}$ swaps, i.e., $\Delta(N^{k,n}_s,N^{k,n}_{s'})=2^{n-k}$ for each of the $2^{k}-1$ CPTs $N^{k,n}_{s'}$ other than $N^{k,n}_s$ that have parent set $P$. 

    For $k'=0$ there are $2^{2k}$ contexts of $Pa(N^{k,n}_s,V_n) \cup P$. 
    $N^{k,n}_s$ entails $0 \succ 1$ for $2^k-1$ contexts and $N^{k,n}_{s'}$ entails $1 \succ 0$ for one context. These contexts are independent of each other since $k'=0$. This yields $(2^k-1)\cdot 1= 2^k-1$ contexts of $Pa(N^{k,n}_s,V_n) \cup P$ on which the two CPTs disagree. Accounting also for the symmetric case with the roles of $N^{k,n}_s$ and $N^{k,n}_{s'}$ exchanged, we obtain a disagreement on $2^{k+1}-2$  contexts of $Pa(N^{k,n}_s,V_n) \cup P$, each corresponding to $2^{n-2k-1}$ swaps. This yields $\Delta(N^{k,n}_s,N^{k,n}_{s'})=2^{n-k}-2^{n-2k}$ for each  CPT $N^{k,n}_{s'}$ that has a parent set $P$ disjoint from $Pa(N^{k,n}_s,V_n)$, i.e., a parent set $P$ yielding $k'=0$. There are $2^k\binom{n-k-1}{k}$ such CPTs, namely $2^k$ CPTs for each choice of $k$-element set $P$ disjoint from $Pa(N^{k,n}_s,V_n)$.
    
    Lastly, for $1 \le k' \le k-1$, there are $2^{2k-k'}$ contexts of $Pa(N^{k,n}_s,V_n) \cup P$. Note that, for $\gamma_1\in\Inst(Pa(N^{k,n}_s,V_n))$ and $\gamma_2\in\Inst(P)$, we have: ($N^{k,n}_s$ has $\gamma_1:1\succ 0$ \emph{and}\/ $N^{k,n}_{s'}$ has $\gamma_2:1\succ 0$) iff $\gamma_1$ and $\gamma_2$ are consistent, i.e., they have the same values on all attributes in $Pa(N^{k,n}_s,V_n) \cap P$. Now suppose $N^{k,n}_s$ entails $1 \succ 0$ for $\gamma_1$. There are $2^{k-k'}$ CPTs with parent set $P$ that entail $1 \succ 0$ for some $\gamma_2$ consistent with $\gamma_1$, and $2^k-2^{k-k'}$ CPTs with parent set $P$ that do not.

Consider CPTs of the first type. There are $2^{k}$ contexts of $Pa(N^{k,n}_s,V_n)$; for exactly one such context, namely $\gamma_1$, the CPT $N^{k,n}_s$ entails $1 \succ 0$. Each of the $2^{k-k'}$ contexts of $P \setminus Pa(N^{k,n}_s,V_n)$ can be appended to $\gamma_1$. For $2^{k-k'}-1$ of these, $N^{k,n}_{s'}$ entails $0 \succ 1$; so $N^{k,n}_s$ and $N^{k,n}_{s'}$ disagree on all such contexts appended to $\gamma_1$. Also counting the symmetric case where $N^{k,n}_{s'}$ entails $1 \succ 0$, $N^{k,n}_s$ and $N^{k,n}_{s'}$ disagree on $2^{k-k'+1}-2$ contexts of $Pa(N^{k,n}_s,V_n) \cup P$, each of which orders $2^{n-2k+k'-1}$ swaps. Thus $\Delta(N^{k,n}_s,N^{k,n}_{s'})=2^{n-k}-2^{n-2k+k'}$ in this case.

    Consider CPTs of the second type. There are $2^{k}$ contexts of $Pa(N^{k,n}_s,V_n)$; for exactly one such context, namely $\gamma_1$, the CPT $N^{k,n}_s$ entails $1 \succ 0$.  Each of the $2^{k-k'}$ contexts of $P \setminus Pa(N^{k,n}_s,V_n)$ can be appended to $\gamma_1$. Since CPTs of the second type entail $0 \succ 1$ for every $\gamma_2\in\Inst(P)$ that is consistent with $\gamma_1$, we have that $N^{k,n}_{s'}$ entails $0 \succ 1$ for all $2^{k-k'}$ completions of $\gamma_1$ in $\Inst(P)$. Counting also the symmetric case where $N^{k,n}_{s'}$ entails $1 \succ 0$, $N^{k,n}_s$ and $N^{k,n}_{s'}$ disagree on $2^{k-k'+1}$ contexts of $Pa(N^{k,n}_s,V_n) \cup P$, each of which orders $2^{n-2k+k'-1}$ swaps.  Thus $\Delta(N^{k,n}_s,N^{k,n}_{s'}=2^{n-k}$ in this case.

    Combining these pieces, the value of $f_{T^{k,n}}(N^{k,n}_s)$ equals
    \begin{align*}
    && (2^k-1)\cdot2^{n-k}+2^k \binom{n-k-1}{k}(2^{n-k}-2^{n-2k}) \\
    &+ & \sum_{k'=1}^{k-1}\big[2^{k-k'} \binom{k}{k'}\binom{n-k-1}{k'}(2^{n-k}-2^{n-2k+k'})\\
    &&\ \ +2^k-2^{k-k'}\binom{k}{k'}\binom{n-k-1}{k'}2^{n-k}\big]
    \end{align*}



Simplifying this with straightforward calculations yields
    $f_{T^{k,n}}(N^{k,n}_s)= (2^{n}-2^{n-k}) \sum_{k'=0}^{k} \binom{k}{k'}\binom{n-k-1}{k'}$.
\end{proof}

\noindent\emph{Proof of Theorem~\ref{thm:k=2}.}\/ Follows from Lemmas \ref{lem:opt2} and \ref{lem:apr}, using Vandermonde's identit; see appendix for details.
\qed

\section{Conclusions}

Since CP-net aggregation (wrt swap preferences) is known to be intractable, the design and analysis of approximation algorithms for preference aggregation is one of few viable approaches to efficient preference aggregation in this context. Proposition~\ref{prop:fixedt} implies that optimal CP-net aggregation is intractable not due to any difficulties in scaling with the number of attributes, but just due to the difficulty of scaling with the number of input CPTs. In particular, the cause of intractability lies solely in the parent set size of optimal solutions, which can be  asymptotically larger than the size of the largest input parent set. 

Therefore, we focused on approximation algorithms that keep the size of the output parent set linear in the size of the largest input parent set. A trivial such algorithm is one that simply outputs the best input CPT, which yields a 2-approximation in general. When imposing a symmetry constraint on the input CPT, the approximation ratio of this algorithm improves from 2 to $4/3$, but in general, the ratio can be arbitrarily close to $2$ (see Theorem~\ref{thm:2-epsilon}). 

Algorithm~\ref{alg:approx} instead considers each input \emph{parent set}\/ and calculates a \emph{provably}\/ optimal CPT for that parent set. Finally it outputs the best thus attained CPT. This polynomial-time method is never worse than the trivial algorithm, yet substantially better for some families of input instances. At the time of writing this paper, we are not aware of any problem instance on which Algorithm~\ref{alg:approx} attains an approximation ratio greater than $4/3$. One open problem is to either prove that $4/3$ is indeed an upper bound on this algorithm's approximation ratio, or else to provide a problem instance for which Algorithm~\ref{alg:approx} has a ratio exceeding $4/3$.

Due to the relations between binary aggregation and CP-net aggregation, we hope that our work provides insights that are useful beyond the aggregation of CP-nets. 


\paragraph{Acknowledgements.} The authors would like to thank Kamyar Khodamoradi for helpful discussions. Boting Yang was supported in part by an NSERC Discovery Research Grant, Application No.: RGPIN-2018-06800. Sandra Zilles was supported in part by an NSERC Discovery Research Grant, Application No.: RGPIN-2017-05336, by an NSERC Canada Research Chair, and by a Canada CIFAR AI Chair held at the Alberta Machine Intelligence Institute (Amii).

\bibliography{aaai24}

\newpage

\section{Appendix}

\section{An example CP-net}

\begin{figure}[H]
    \centering
    \includegraphics[scale=0.375]{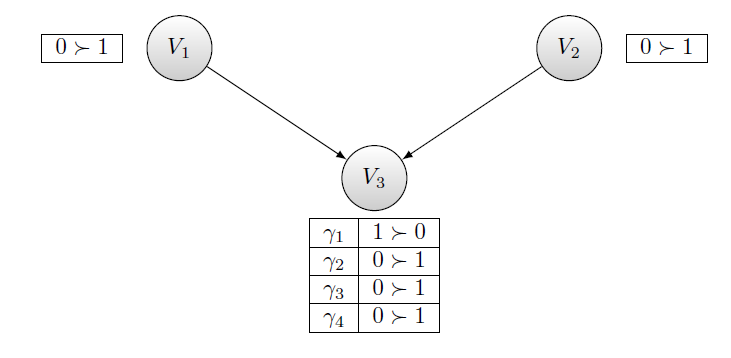}
    \caption{$N_1$}
    \label{fig:N1}
\end{figure}

In Figure \ref{fig:N1}, $N_1$ is defined over $V=\{V_1,V_2,V_3\}$. For attributes $V_1$ and $V_2$, the individual with $N_1$ as their preference model unconditionally prefers $0 \succ 1$. For $V_3$, their preference is conditioned on the values assigned to $V_1$ and $V_2$. Each vertex, denoting an attribute, is annotated with a Conditional Preference Table (CPT), with a total preference ordering given for each possible instantiation of the corresponding parent set. Any outcome pair that differs on the value assigned to exactly one attribute is called a swap, e.g. $000,001$ is a swap of $V_3$. Assuming $\gamma_1=00$, the best outcome for $N_1$ is $001$, and the worst outcome is $000$. The outcome pair $001,101$ cannot be ordered by $N_1$, because CP-net semantics does not tell us how to trade-off between the best value for $V_1$ and that for $V_2$.

\section{Matrix Representation}

Notice that $N_1$ in Figure \ref{fig:N1} entails $1 \succ 0$ for only one context of $Pa(N_1,V_3)$, namely $\gamma_1=00$. We now define a tuple $T$ of four CP-nets. Each $N_i \in T$ has identical dependency graph, and identical CPTs for $V_1$ and $V_2$. The only difference is that for each $N_i$, $CPT(N_i,V_3)$ entails $1 \succ 0$ for $\gamma_i$, and $0 \succ 1$ for all other contexts. We now give the matrix representation for $T$. Each row represents an instantiation of $\{V_1,V_2\}$, corresponding to a different swap of $V_3$. Each column represents $N_i \in T$.

\begin{tabular}{|c|c|c|c|c|}
    \hline
    $o[\{V_1,V_2\}]$ & $N_1$ & $N_2$ & $N_3$ & $N_4$ \\
    \hline
     $00$ & $1$ & $0$ & $0$ & $0$ \\
     \hline
     $01$ & $0$ & $1$ & $0$ & $0$ \\
     \hline
     $10$ & $0$ & $0$ & $1$ & $0$ \\
     \hline
     $11$ & $0$ & $0$ & $0$ & $1$ \\
     \hline
\end{tabular}

In fact, $T$ is an instance of the special family of input instances mentioned in Definition 15. In particular, $T$ is $T^{2,3}$. It is easy to see that $f_T(N)=4$ for an optimal solution $N$. We can also verify that none of the inputs attain a ratio strictly less than $\frac{3}{2}$, while Algorithm 1 produces an optimal solution.

\section{Proofs of Lemmas For Theorem~6}

\setcounter{theorem}{6}

Lemmas \ref{lem:opt} and \ref{lem:SameCount} were proven in the main body of the paper:

\begin{lemma}
\label{lem:opt}
Let $T'=(N'_1,\ldots,N'_{t})$ be any problem instance of $t$ CPTs. Assuming each of the $2^t$ voting configurations occurs exactly once in $M(T')$, and $N$ is an optimal solution for $T'$, we have 
\begin{equation*}
f_{T'}(N)=\begin{cases}
          2 \cdot \sum_{\kappa = 0}^{c-1} \kappa \binom{2c-1}{\kappa} &\text{if }t=2c-1 \\
          2 \cdot \sum_{\kappa = 0}^{c-1} \kappa \binom{2c}{\kappa} + c \binom{2c}{c}  &\text{if }t=2c \\
     \end{cases}
\end{equation*}
\end{lemma}

\begin{proof}
    See main body of the paper.
\end{proof}

\begin{lemma}
\label{lem:SameCount}
Each of the $2^t$ voting configurations of $T$ is the row vector for exactly $2^{n-t-1}$ swaps.
\end{lemma}
\begin{proof}
    See main body of the paper.
\end{proof}

These lemmas are used in the proofs of the following four lemmas.

\begin{lemma}
Suppose each of the $2^t$ voting configurations occurs for the same number of swaps. Let $N$ be an optimal solution for $T$. Then 
\begin{equation*}
f_T(N)=\begin{cases}
          t \cdot 2^{n-2}-2^{n-t-1} \cdot c \binom{2c-1}{c} &\text{if } t=2c-1 \\
          t \cdot 2^{n-2}-2^{n-t-1} \cdot c \binom{2c}{c} &\text{if } t=2c \\
     \end{cases}
\end{equation*}
\end{lemma}

\begin{proof}
Lemma \ref{lem:opt} provides the total error of $N$ when each voting configuration occurs for exactly one swap. If each voting configuration occurs for the same number of swaps, we can simply multiply the error made on each, by the number of swaps for which that configuration occurs. For the tuples satisfying the conditions of Lemma \ref{lem:SameCount}, we obtain $f_T(N)=$
\begin{equation*}
\begin{cases}
          2^{n-t}\cdot \sum_{\kappa=0}^{c-1} \kappa\binom{2c-1}{\kappa} &\text{if} \, t=2c-1 \\
          2^{n-t} \cdot \sum_{\kappa=0}^{c-1} \kappa\binom{2c}{\kappa} + 2^{n-t-1} \cdot c \binom{2c}{c}  &\text{if} \, t=2c \\
     \end{cases}
\end{equation*}
which simplifies to the desired expression.
\end{proof}

For a problem instance $T$ under our premises, this gives us combinatorial expressions to compute the error made by an optimal CPT. Next, we give an expression to compute the error made by any of the input CPTs $N_s$, which we claim to be at most $4/3$ times the objective value of an optimal solution. 

\begin{lemma}
$f_T(N_s) = (t-1)\cdot2^{n-2}$ for all $s\in\{1,\ldots,t\}$.
\end{lemma}

\begin{proof}
Consider any $s'\ne s$. Since the CPTs in the tuple $T$ are symmetric and have pairwise disjoint parent sets, $N_s$ and $N_{s'}$ disagree on half of all swaps, i.e., on $2^{n-2}$ swaps. Thus $f_T(N_s) = (t-1)\cdot 2^{n-2}$.
\end{proof}


\begin{lemma}
\label{lem:odd_t}
If $1\le s\le t=2c-1$, $c>1$ then $\frac{f_T(N_s)}{f_T(N)} \le \frac{4}{3}$.
\end{lemma}

\begin{proof}
This is equivalent to proving
$$ (2c-1)\cdot2^{n-2} - 2^{n-2c} \cdot c \binom{2c-1}{c} \ge \frac{3}{4} (2c-2)\cdot2^{n-2} $$

$$\Leftarrow (2c-1)\cdot2^{n-2} - 2^{n-2c} \cdot c \binom{2c-1}{c} \ge (6c-6)\cdot2^{n-4} $$

$$\Leftarrow (8c-4)\cdot2^{n-4} - 2^{n-2c} \cdot c \binom{2c-1}{c} \ge (6c-6)\cdot2^{n-4} $$

$$\Leftarrow 2^{n-2c} \cdot c \binom{2c-1}{c} \le (2c+2)\cdot2^{n-4} $$

    $$ \Leftarrow c \binom{2c-1}{c} \le (2c+2) \cdot 2^{2c-4} $$

$$ \Leftarrow c \binom{2c-1}{c} \le (c+1) \cdot 2^{2c-3} $$

We prove the latter for all $c > 1$ using induction.

When $c=2$, the inequality obviously holds.

Assume the inequality holds for some fixed $c$
and we need to prove that

$$ (c+1)\binom{2c+1}{c+1} \le (c+2) \cdot 2^{2c-1}$$

We know $$\binom{2c+1}{c+1} = \binom{2c+1}{c} = \frac{2c+1}{c+1} \binom{2c}{c} = \frac{4c+2}{c+1} \binom{2c-1}{c}$$

and $$ (c+1)\binom{2c+1}{c+1} = (4c+2) \binom{2c-1}{c} $$

By inductive hypothesis,
$$ c \binom{2c-1}{c} \le (c+1) \cdot 2^{2c-3}\,, $$
so, multiplying both sides by $4$
$$ 4c \binom{2c-1}{c} \le (c+1) \cdot 2^{2c-1} \,.$$
Adding $2 \cdot \binom{2c-1}{c}$ on both sides yields
$$ 4c \binom{2c-1}{c} + 2 \cdot \binom{2c-1}{c} \le (c+1) \cdot 2^{2c-1} + 2 \cdot \binom{2c-1}{c} $$

$$ (4c+2) \binom{2c-1}{c} \le (c+1) \cdot 2^{2c-1} + \binom{2c}{c} $$

$$ (c+1) \binom{2c+1}{c+1} \le (c+1) \cdot 2^{2c-1} + \binom{2c}{c} $$

We complete the induction by proving
$$ \binom{2d}{d} \le 2^{2d-1} $$
for all $d\ge 1$, also using induction. 

For $d=1$, this is obviously true.

Assume $ \binom{2d}{d} \le 2^{2d-1} $ for a fixed $d$.
We have to prove $\binom{2d+2}{d+1} \le 2^{2d+1} $.
We know $\binom{2d+2}{d+1} = \frac{2 \cdot (2d+1)}{d+1} \binom{2d}{d}$.
By inductive hypothesis $ \binom{2d}{d} \le 2^{2d-1} $. Multiplying both sides by $\frac{2 \cdot (2d+1)}{d+1}$ gives us

$$ \frac{2 \cdot (2d+1)}{d+1} \binom{2d}{d} \le \frac{2 \cdot (2d+1)}{d+1} \cdot  2^{2d-1} $$

$$ \binom{2d+2}{d+1} \le \frac{2 \cdot (2d+1)}{d+1} \cdot  2^{2d-1} \le 4 \cdot 2^{2d-1} $$ 

The above holds because $\frac{2 \cdot (2d+1)}{d+1} < 4$. This proves $ \binom{2d}{d} \le 2^{2d-1} $ for all values of $d\ge 1$, which in turn completes our first induction.
\end{proof}

\begin{lemma}
\label{lem:even_t}
If $1\le s\le t=2c$, $c>1$ then $\frac{f_T(N_s)}{f_T(N)} \le \frac{4}{3}$.
\end{lemma}

\begin{proof}
This is equivalent to proving

$$2c\cdot2^{n-2} - 2^{n-2c-1} \cdot c \binom{2c}{c} \ge \frac{3}{4} (2c-2)\cdot2^{n-2} $$

$$ \Leftarrow 2c\cdot2^{n-2} - 2^{n-2c-1} \cdot c \binom{2c}{c} \ge (6c-3)\cdot2^{n-4} $$

$$ \Leftarrow 8c\cdot2^{n-4} - 2^{n-2c-1} \cdot c \binom{2c}{c} \ge (6c-3)\cdot2^{n-4} $$

$$ \Leftarrow 2^{n-2c-1} \cdot c \binom{2c}{c} \le (2c+3)\cdot2^{n-4} $$

$$ \Leftarrow c \binom{2c}{c} \le (2c+3)\cdot2^{2c-3} $$

$$ \Leftarrow c \binom{2c-1}{c} \le (2c+3)\cdot2^{2c-4} $$

In the proof of Lemma \ref{lem:odd_t}, we established $c \binom{2c-1}{c} \le (c+1)\cdot2^{2c-3} =  (2c+2)\cdot2^{2c-4}$. Since $c>1$, the above immediately follows.

\end{proof}

\section{Proof of Theorem~\ref{thm:k=2}}

\setcounter{theorem}{18}
Lemmas \ref{lem:opt2} and \ref{lem:apr} were proven in the main body of the paper:

\begin{lemma}\label{lem:opt2}
    Let $N$ be an optimal solution for $T^{k,n}$. Then $f_{T^{k,n}}(N) = 2^{n-1} \binom{n-1}{k}$.
\end{lemma}

\begin{lemma}\label{lem:apr}
    Let $1\le s\le \binom{n-1}{k}2^k$. Then $f_{T^{k,n}}(N^{k,n}_s)= (2^n - 2^{n-k}) \sum_{k'=0}^{k} \binom{k}{k'} \binom{n-k-1}{k-k'}$.
\end{lemma}

\setcounter{theorem}{17}
\begin{theorem}\label{thm:k=2}
    Let $n\ge 3$, $k\in\{2,\ldots,n-1\}$, and $1\le s\le \binom{n-1}{k}2^k$. Let $N$ be an optimal solution for $T^{k,n}$. Then $f_{T^{k,n}}(N^{k,n}_s)\ge (3/2) f_{T^{k,n}}(N)$.
\end{theorem}

\begin{proof}
    By Lemmas \ref{lem:opt2} and \ref{lem:apr}, and since $(3/2)2^{n-1}=3\cdot2^{n-2}$, we need to show
    $$ (2^n - 2^{n-k}) \sum_{k'=0}^{k} \binom{k}{k'} \binom{n-k-1}{k-k'} \ge 3\cdot 2^{n-2} \binom{n-1}{k}\,.$$
    Since $k \ge 2$, we have $2^n-2^{n-k}\ge 2^n-2^{n-2}=3\cdot 2^{n-2}$. Thus it suffices to show
%
%
%
%
    $$\sum_{k'=0}^{k} \binom{k}{k'} \binom{n-k-1}{k-k'} \ge \binom{n-1}{k}\,,$$
which holds with equality by Vandermonde's Identity.
\end{proof}

\end{document}